\newcommand{\sgn}{\operatornamewithlimits{sgn}}
\newcommand{\atan}{\operatornamewithlimits{atan2}}
\newcommand{\diagonal}{\operatornamewithlimits{diag}}
\newtheoremstyle{preprintstyle}
  {10pt plus 2pt minus 2pt}
  {10pt plus 2pt minus 2pt}
  {\itshape}
  {}
  {\color{secondary}\bfseries}
  { - }
  {4pt}
  {}
\theoremstyle{preprintstyle}
\newtheorem{theorem}{\textbf{Theorem}}
\newtheorem{lemma}{\textbf{Lemma}}
\newtheorem{assumption}{\textbf{Assumption}}
\title{A Robust Sensorless Controller-Observer Strategy for PMSMs with Unknown Resistance and Mechanical Model}
\author{Alessandro~Bosso
		\thanks{ A. Bosso, A. Tilli and C. Conficoni are within the research group Advanced Control and Technologies for Enhanced Mechatronics and Automation (\href{https://dei.unibo.it/en/research/research-groups/actema}{\color{secondary} ACTEMA}) at the Department of Electrical, Electronic and Information Engineering (\href{https://dei.unibo.it/en}{\color{secondary} DEI}), University of Bologna, Viale Risorgimento 2, 40136 Bologna, Italy.
		\newline Email: [\href{mailto:alessandro.bosso3@unibo.it}{alessandro.bosso3},
		\href{mailto:andrea.tilli@unibo.it}{andrea.tilli},
		\href{mailto:christian.conficoni3@unibo.it}{christian.conficoni3}]@unibo.it
		 }
		\And Andrea~Tilli \footnotemark[1]
		\And Christian~Conficoni \footnotemark[1]}
\begin{document}
\maketitle

\begin{abstract}
In this work, we present a mixed sensorless strategy for Permanent Magnets Synchronous Machines, combining a torque/current controller and an observer for position, speed, flux and stator resistance.
The proposed co-design is motivated by the need of an appropriate signal injection technique, in order to guarantee full state observability.
Neither the typical constant or slowly-varying speed assumptions, nor a priori mechanical model information are used in the observer design.
Instead, the rotor speed is modeled as an unknown input disturbance with constant (unknown) sign and uniformly non-zero magnitude.
With the proposed architecture, it is shown that the torque tracking and signal injection tasks can be achieved and asymptotically decoupled.
Because of these features, we refer to this strategy as a sensorless controller-observer with no mechanical model.
Employing a gradient descent resistance/back-EMF estimation, combined with the unit circle formalism to describe the rotor position, we rigorously prove regional practical asymptotic stability of the overall structure, with a domain of attraction that can be made arbitrarily large, not including a lower dimensional manifold.
The effectiveness of this design is further validated with numerical simulations, related to a challenging application of UAV propellers control.
\end{abstract}

\keywords{Nonlinear Observers and Filter Design \and Input and Excitation Design \and Adaptive Control \and Time-Varying Systems}

\section{Introduction}

Permanent Magnet Synchronous Machines (PMSMs) are key elements for several industrial applications.
In order to enhance their reliability and minimize costs, a significant research effort has been dedicated to sensorless control techniques: these solutions exploit only electrical measurements, and knowledge of the system dynamics, to replace direct sensing of the rotor mechanical variables, i.e. angular speed and position.
Such quantities are properly estimated to implement control algorithms ensuring high regulation performance, such as those based on the field orientation principle \cite{LipoBook}.
\par Relatively to the vast literature on the topic, the present work belongs to the framework of Model-based strategies, where the machine nominal dynamics is exploited in the control design.
We refer to \cite{acarnley2006}, \cite{MarinoIJA08}, \cite{Bolognani2003} and references therein for relevant contributions in this field. 
In this respect, two particularly challenging and still not fully closed issues are: 
\begin{itemize}
\item to handle variable speed scenarios with limited/absent knowledge of the mechanical dynamics;
\item to ensure robustness to parametric uncertainties of the electromagnetic dynamics.
In fact, both PMSM stator resistance and magnetic flux significantly vary with the temperature.
\end{itemize}
Without aiming to be exhaustive, we recall some recent results containing significant steps towards the solution of the above challenges.
In \cite{Verrelli2017_CEP}, \cite{VerrelliAutomatica2018} we find sensorless control solutions with resistance reconstruction, considering time-varying speed scenarios.
There, local results are rigorously drawn, yet the mechanical model is assumed known.
In \cite{Ortega2015} interesting results are given for the case of variable speed and no use of the mechanical model, but with known stator resistance.
The authors proposed a clever reparameterization of the PMSM electromagnetic dynamics to obtain a linear regression problem, solved with standard gradient descent techniques.
The approach has been extended in \cite{bobtsovIJC2017}, where the so-called DREM filtering techniques \cite{DREM2017} are applied to improve the observer performance, and in \cite{bobtsovECC2018} where the authors include stator resistance estimation, even though they also use the mechanical model and its parameters.
The gradient-based sensorless observer described in \cite{Astolfi2010}, and modified in \cite{Praly2012} to obtain global properties, has been enhanced in  \cite{BernardIFAC2017}, with the estimation of the rotor flux amplitude.
In \cite{BernardHAL2019}, a thorough observability analysis of the pair rotor position/stator resistance is carried out.
Then, a nonlinear Luenberger observer is combined with a resistance estimator based on a scalar minimization problem.
This way, the most likely value is selected among multiple solutions in case of indistinguishability.
This strategy does not rely on the mechanical dynamics, yet known rotor flux amplitude is assumed, and speed is recovered by filtering the position estimate.
In \cite{SCL2019} we proposed a speed, position, and flux sensorless observer solution, with no information about the mechanical model.
By Lyapunov-based and two time scales arguments, regional practical asymptotic stability was established for the case of time-varying speed with bounded derivative and unknown constant sign.
In \cite{bosso2020_semiglobal}, a hybrid modification of the aforementioned observer is introduced to achieve semi-global stability and a faster estimation dynamics.
\par The present work expands significantly the aforementioned approach. Indeed, the hypothesis of perfectly known electrical parameters is removed by letting the stator resistance be uncertain.
Concerning the observability properties of this additional parameter, relatively strong signal conditions must be satisfied \cite{BernardHAL2019}, and these are related to the behavior of stator currents.
In this respect, a complete separation between controller and observer design cannot guarantee, a priori, desirable tracking and stability performance.
For this purpose, we take a step towards a more coupled sensorless strategy, where an observer, inspired by the solution proposed in \cite{SCL2019}, is interconnected with a torque/current controller.
This controller is responsible not only for standard torque generation objectives, but also for signal injection to ensure resistance observability.
In practice, the proposed solution should be then analyzed in closed loop with a speed controller.
However, we decide to simplify the mathematical framework by supposing that such feedback does not destroy the conditions on speed for sensorless observability, so that we can focus on the controller-observer structure.
\par To summarize the goals of the present work, we want to achieve reconstruction of rotor speed, position, flux and stator resistance, while at the same time ensuring that the torque requests given by an upper level (a priori given) speed controller are satisfied asymptotically.
The two time scales separation of \cite{SCL2019} is used to simplify the stability analysis, and to the same slow subsystem (an attitude observer on the unit circle), we associate a fast subsystem which now includes a current controller with adaptive resistance and back-EMF reconstruction.
An Immersion and Invariance (I\&I) strategy \cite{astolfi2003immersion} is proposed to estimate the stator resistance and yield a simple continuous-time gradient descent algorithm.
In this context, we show that by tracking a sinusoidal current reference, not responsible for electric torque distortion, it is possible to derive regional practical asymptotic stability, with the domain of attraction that can be extended arbitrarily in size, except for a lower-dimensional unstable manifold.
\par The paper is organized as follows. 
In Section \ref{sec:ProblStat} we briefly recall the PMSM model, so that we can state the observer-controller problem and introduce the required standing assumptions.
Section \ref{sec:ObsScheme} is devoted to the proposed solution and its stability analysis.
The results are further validated in Section \ref{sec:simulation} by means of numerical simulation tests.
Finally, Section \ref{sec:conclusion} wraps up the paper with final remarks and future directions.

\section{Notation and Useful Results}

\subsection{Notation}
We use $(\cdot)^T$ to denote the transpose of real-valued matrices, and often write $(v, w)$, with $v$, $w$ column vectors, to indicate the concatenated vector $(v^T \; w^T)^T$.
When clear from the context, the time argument will be omitted for notational simplicity.
In case of non-differentiable signals, the upper right Dini derivative, indicated with $D^+$, will be employed as generalized derivative.
\par Following the formalism in \cite{SCL2019}, we employ the unit circle $\mathbb{S}^1$ to represent reference frames involved in the manipulation of PMSM equations.
Indeed, $\mathbb{S}^1$ is a compact abelian Lie group of dimension $1$, with identity element $(1 \; 0)^T$.
A generic integrator on $\mathbb{S}^1$ is given by:
\begin{equation}\nonumber
\dot{\zeta} = u(t)\underbrace{\begin{pmatrix} 0 & -1\\1 & 0\end{pmatrix}}_{\mathcal{J}}\zeta, \qquad \zeta \in \mathbb{S}^1,
\end{equation}
with $u(t) \in \mathbb{R}$.
Any angle $\vartheta \in \mathbb{R}$ can be mapped into an element of the unit circle given by $(\cos(\vartheta) \; \sin(\vartheta))^T \in \mathbb{S}^1$.
Furthermore, to any $\zeta = (c \; s)^T \in \mathbb{S}^1$ we can associate a rotation matrix $\mathcal{C}[\zeta] = {\tiny \begin{pmatrix} c & -s\\ s & c \end{pmatrix}}$, which is used for group multiplication: for any $\zeta_1$, $\zeta_2 \in \mathbb{S}^1$, their product is given by $\mathcal{C}[\zeta_1]\zeta_2 = \mathcal{C}[\zeta_2]\zeta_1$.
\subsection{Uniform Complete Observability}
Consider a linear time-varying system of the form:
\begin{equation}\nonumber
\dot{x} = A(t)x \qquad y = C(t)x.
\end{equation}
with $x \in \mathbb{R}^n$, $y \in \mathbb{R}^m$ and $A: \mathbb{R}_{\geq 0} \to \mathbb{R}^{n\times n}$, $C: \mathbb{R}_{\geq 0} \to \mathbb{R}^{m \times n}$ piecewise continuous functions.
Denote with $\Phi$ the transition matrix associated with $A$, then we say that the pair $(C, A)$ is uniformly completely observable (UCO) if there exist positive scalars $\alpha_1$, $\alpha_2$, $\delta$ such that, for all $t \geq 0$ \cite{sastry2011adaptive}:
\begin{equation}\nonumber
\alpha_1 I_n \leq \int_{t}^{t + \delta}\Phi^T(s, t)C^T(s)C(s)\Phi(s, t)ds \leq \alpha_2I_n,
\end{equation}
where the above integral is known as observability Gramian.
In classic continuous-time system identification, the UCO property is the key element to achieve exponential stability of the parameter estimation error.
Indeed, the system
\begin{equation}\nonumber
\dot{x} = -\psi(t)\psi^T(t)x,
\end{equation}
with $x \in \mathbb{R}^n$ and $\psi: \mathbb{R}_{\geq 0} \to \mathbb{R}^n$ a piecewise continuous function, can be proven to be uniformly globally exponentially stable as long as $\psi$ is persistently exciting (PE), i.e. the pair $(\psi^T, 0_{n\times n})$ is UCO.
The stability proof of the observer will involve similar arguments in a slightly modified context.

\section{Model Formulation and Problem Statement}\label{sec:ProblStat}

Firstly, we recall the PMSM model.
Under balanced working conditions, linear magnetic circuits, and negligible iron losses, according to standard planar representation of three-phase electrical machines, we can write the two-phase PMSM dynamics in a static frame as:
\begin{equation}\label{eq:PMSM_static}
\frac{d}{dt}i_s  = -\frac{R}{L}i_s + \frac{u_s}{L} - \frac{\omega\varphi \mathcal{J}\zeta}{L}, \qquad \dot{\zeta} = \omega \mathcal{J} \zeta
\end{equation}
where $i_s$, $u_s \in \mathbb{R}^2$ are the stator currents and voltages, $\omega$ is the electrical rotor speed, while $\zeta \in \mathbb{S}^1$ and $\varphi \in \mathbb{R}_{>0}$ represent the orientation and amplitude of the rotor flux vector.
In addition, $R$ and $L$ are the stator resistance and inductance, respectively.
Finally, the torque generated by the PMSM is given by:
\begin{equation}
T_{\text{el}} = -\frac{3}{2}p\varphi \zeta^T\mathcal{J}i_s,
\end{equation}
with $p \in \mathbb{Z}_{\geq 1}$ the number of pole pairs of the motor.
Consider a generic rotating reference frame, $\zeta_r \in \mathbb{S}^1$, satisfying:
\begin{equation}
\dot{\zeta}_r = \omega_r \mathcal{J}\zeta_r,
\end{equation}
with $\omega_r$ a piecewise continuous signal.
Then, system \eqref{eq:PMSM_static} can be rewritten in the rotating frame $\zeta_r$ as:
\begin{equation}\label{eq:PMSM_dyn_gen}
\frac{d}{dt}i_r = -\frac{R}{L}i_r+\frac{u_r}{L} - \frac{\omega\varphi \mathcal{J}\mathcal{C}^T[\zeta_r]\zeta}{L} - \omega_r\mathcal{J}i_r, \qquad \dot{\zeta} = \omega \mathcal{J} \zeta
\end{equation}
where $i_r = \mathcal{C}^T[\zeta_r]i_s$, $u_r = \mathcal{C}^T[\zeta_r]u_s$.\\
Concerning the observability properties of system \eqref{eq:PMSM_dyn_gen}, the assumption of non-permanent zero speed is usually required as sufficient condition to reconstruct $\omega$, $\zeta$, $\varphi$, assuming currents and voltages available for measurement and the parameters $R$ and $L$ perfectly known \cite{zaltni2010synchronous}.
However, this is in general not sufficient to reconstruct the stator resistance, so additional features, in particular related to the stator currents (see \cite{BernardHAL2019} and references therein), need to be satisfied.
The fundamental idea is that a sufficiently rich current profile can be used to impose persistency of excitation, thus guaranteeing observability of $R$.
\par The desired richness features are not a priori guaranteed, though, if control and observation tasks are completely decoupled.
As a consequence, the focus of this work is on the co-design of both a sensorless observer and a current controller that, altogether, formally ensure robust observability, under some commonly satisfied working conditions.
We aptly refer to this co-design as a \textit{mixed controller-observer problem}.
It must be stressed that the speed controller is a priori given, as it is not directly responsible for current excitation, and its main role in the present work is to provide a reference for torque regulation, indicated with $T^*_{\text{el}}$.
Still, a feedback between the controller-observer and the speed dynamics clearly exists, and an incorrect behavior of the rotor speed can potentially destroy the observability properties.
To simplify the mathematical setup, we assume that the observability properties deriving from $\omega$ are a priori satisfied.
This, in practice, translates to imposing specific tracking properties to the controller-observer, and a strong connection with the domain of attraction arises.
We leave the analysis of this interconnection out of the scope of this work.
\par For observer design purposes, $\omega$ is modeled as an unknown input which is supposed to satisfy the following regularity Assumption.
\begin{assumption}\label{hyp:omega}
The signal $\omega(\cdot)$ is defined on the interval $[0, \infty)$\footnote{The initial time is chosen to be $0$, without loss of generality, since we assume invariance of the properties of $\omega$.} and, in addition:
\begin{enumerate}[a)]
\item $\omega(\cdot)$ is $\mathcal{C}^0$ and piecewise $\mathcal{C}^1$ in its domain of existence;
\item there exist positive scalars $\omega_{\min}$, $\omega_{\max}$ such that, for all $t \geq 0$, it holds $\omega_{\min} \leq |\omega(t)| \leq \omega_{\max}$;
\item $|D^+\omega(t)|$ exists and is bounded, for all $t \geq 0$. 
\end{enumerate}
\end{assumption}
The constant sign and uniformly non-zero magnitude assumptions are indeed somewhat restrictive, and are the same employed in \cite{SCL2019}.
However, significant applications such as renewables electric energy generation and electric vehicles propulsion (UAVs, HEVs) usually meet such requirements.
Finally, we present an Assumption used for design purposes.
\begin{assumption}\label{hyp:info}
The following hypotheses hold:
\begin{enumerate}[a)]
\item the torque reference $T^*_{\textup{el}}(\cdot)$ is $\mathcal{C}^1$ in the interval $[0, \infty)$ and satisfies $\| T^*_{\textup{el}}(\cdot) \|_{\infty} \leq T^*_{\max}$, for some positive scalar $T^*_{\max}$.
Furthermore $T^*_{\textup{el}}(\cdot)$ and its derivative, $\dot{T}^*_{\textup{el}}(\cdot)$, are available for measurement;
\item $u_s$ and $i_s$ are available for control, along with parameters $L$ and $p$;
\item $\zeta$, $\omega$, $\varphi$ and $R$ are unknown.
\end{enumerate}
\end{assumption}
We can thus summarize the \textit{mixed controller-observer problem} as follows.
Given the PMSM dynamics \eqref{eq:PMSM_static} or \eqref{eq:PMSM_dyn_gen} and under the hypothesis that Assumptions \ref{hyp:omega}-\ref{hyp:info} are satisfied, design a dynamical system such that:
\begin{itemize}
\item an estimate of $\zeta$, $\omega$, $\varphi$ and $R$ is provided, ensuring appropriate stability and convergence properties;
\item the PMSM torque, $T_{\text{el}}(\cdot)$, reaches the reference $T^*_{\text{el}}(\cdot)$ (practically) asymptotically.
\end{itemize}
While the first requirement resembles a typical problem of sensorless observation, the second means that the current controller should not deteriorate, at least asymptotically, the torque tracking performance.
Note that this torque-preservation property is possible because, for any $t \geq 0$, only one stator current direction in $\mathbb{R}^2$ generates torque, so that the exciting signal can be in principle completely masked.

\section{The Proposed Scheme}\label{sec:ObsScheme}

Let $\chi \coloneqq |\omega|\varphi \in \mathbb{R}_{>0}$, $\xi \coloneqq (1/\varphi)\sgn(\omega)$, then let $\zeta_\chi \coloneqq \zeta \sgn(\xi) = \zeta \sgn(\omega)$.
Replacing $\zeta$ with $\zeta_\chi$ in \eqref{eq:PMSM_dyn_gen}, we obtain the following system:
\begin{equation}\label{eq:PMSM_chi_frame}
\begin{split}
\frac{d}{dt}i_r & = -\frac{R}{L}i_r+\frac{u_r}{L} - \frac{\chi \mathcal{J}\mathcal{C}^T[\zeta_r]\zeta_\chi}{L} - \omega_r\mathcal{J}i_r \\
\dot{\zeta}_\chi & = \chi\xi \mathcal{J} \zeta_\chi.
\end{split}
\end{equation}
In this structure, $\xi \in \mathbb{R}$ is an unknown parameter and $\chi(\cdot)$ a positive signal which, by virtue of Assumption \ref{hyp:omega}, is such that:
\begin{enumerate}[a)]
\item $\chi(\cdot)$ is $\mathcal{C}^0$ and piecewise $\mathcal{C}^1$;
\item $\chi_{\text{m}} \leq \chi(t) \leq \chi_{\text{M}}$, for all $t \in [0, \infty)$ and for some positive scalars $\chi_{\text{m}}$, $\chi_{\text{M}}$;
\item $|D^+\chi(t)| \leq M$, for all $t \in [0, \infty)$ and for some positive scalar $M$.
\end{enumerate}
Consider now a reference frame, $\zeta_r = \hat{\zeta}_{\chi}$, used to estimate the frame $\zeta_\chi$, satisfying:
\begin{equation}
\dot{\hat{\zeta}}_\chi = \hat{\omega}_\chi \mathcal{J}\hat{\zeta}_\chi,
\end{equation}
with $\hat{\omega}_\chi$ to be designed for control purposes.
Let $\eta \coloneqq \mathcal{C}^T[\hat{\zeta}_\chi]\zeta_\chi \in \mathbb{S}^1$ be the synchronization error, and let $i_{\hat{\chi}}$, $u_{\hat{\chi}}$ be the electric variables in the frame $\hat{\zeta}_\chi$.
Then, we can rewrite system \eqref{eq:PMSM_chi_frame}, along with the alignment error dynamics, as:
\begin{equation}\label{eq:PMSM_chi_frame_eta}
\begin{split}
\frac{d}{dt}i_{\hat{\chi}} & = -\frac{R}{L}i_{\hat{\chi}}+\frac{u_{\hat{\chi}}}{L} - \frac{\chi \mathcal{J}\eta}{L} - \hat{\omega}_\chi\mathcal{J}i_{\hat{\chi}} \\
\dot{\zeta}_\chi & = \chi \xi \mathcal{J} \zeta_\chi, \quad \dot{\eta} = (\chi\xi - \hat{\omega}_\chi) \mathcal{J} \eta = \omega_\eta\mathcal{J}\eta.
\end{split}
\end{equation}
The electric torque is related to the currents, in the rotating frame $\hat{\zeta}_{\chi}$, as follows:
\begin{equation}
\zeta_{\chi}^T \mathcal{C} [\hat{\zeta}_{\chi}]\mathcal{J} i_{\hat{\chi}} = \eta^T \mathcal{J} i_{\hat{\chi}} = -\eta_1i_{\hat{\chi}2} + \eta_2i_{\hat{\chi}1} = -\frac{2}{3p} \xi T_{\text{el}},
\end{equation}
therefore, if the frames $\hat{\zeta}_\chi$ and $\zeta_\chi$ achieve synchronization, it holds $i_{\hat{\chi}2} = 2/(3p)\xi T_{\text{el}}$, which corresponds to the typical expression employed in sensored field-oriented control.
The remaining component, $i_{\hat{\chi}1}$, can be then freely assigned in order to achieve the desired signal injection.
This suggests that if the synchronization problem is solved, then the solution of the torque tracking objective is in turn satisfied, as long as  $i_{\hat{\chi}1}$ and $i_{\hat{\chi}2}$ are used for signal injection and torque generation, respectively.
\subsection{An Adaptive Attitude Observer on $\mathbb{S}^1$}
Indicate with $h \coloneqq -\chi\mathcal{J}\eta$ the back-EMF vector, and notice that $|h| = \chi$.
Let $\hat{h} = (\hat{h}_1, \hat{h}_2) \in \mathbb{R}^2$, $\hat{\xi} \in \mathbb{R}$ be appropriate estimates of $h$ and $\xi$, respectively, and consider the adaptive observer
\begin{equation}\label{eq:attitude_obs}
\dot{\hat{\zeta}}_\chi = (|\hat{h}|\hat{\xi} + k_\eta\hat{h}_1)\mathcal{J}\hat{\zeta}_\chi, \qquad \dot{\hat{\xi}} = \gamma\hat{h}_1,
\end{equation}
for some positive scalars $k_\eta$, $\gamma$.
In a certainty-equivalence sense, suppose that $h$ is perfectly known, then we can rewrite the dynamics of $x_{\text{s}} \coloneqq (\eta, \tilde{\xi}) \in \mathbb{S}^1 \times \mathbb{R}$, with $\tilde{\xi} \coloneqq \xi - \hat{\xi}$ as follows:
\begin{equation}\label{eq:red_ord_syst}
\dot{\eta} = \left(\chi \tilde{\xi} - k_{\eta}\chi\eta_2 \right)\mathcal{J}\eta, \qquad \dot{\tilde{\xi}} = -\gamma\chi\eta_2.
\end{equation}
In \cite{SCL2019} it was proven that under Assumption \ref{hyp:omega} the attractor $\bar{x}_{\text{s}}  = ((1, 0), 0)$ of system \eqref{eq:red_ord_syst} is uniformly asymptotically stable, with a domain of attraction that can be extended to all points in $\mathbb{S}^1 \times \mathbb{R}$, except for a lower dimensional unstable manifold, originating from the saddle equilibrium $((-1, 0), 0)$.
Furthermore, when $\hat{h} \neq h$ it can be shown that it holds:
\begin{equation}\label{eq:red_ord_syst_pert}
\dot{x}_{\text{s}}  = f_{\text{s}} (x_{\text{s}} , \chi) + N(x_{\text{s}} , \tilde{h}, \chi, \xi),
\end{equation}
with $\tilde{h} \coloneqq h - \hat{h}$ the back-EMF estimation error, $N(\cdot)$ a continuous function vanishing in $\tilde{h} =  0$, and $f_{\text{s}} (\cdot)$ coinciding with the vector field in \eqref{eq:red_ord_syst}.
These considerations suggest that $\hat{h}(t)$ should be designed so that the trajectories of system \eqref{eq:red_ord_syst_pert} are ``close'' to those of system \eqref{eq:red_ord_syst}.
With a time scale separation approach, we thus decide to design the current controller-observer, with resistance and back-EMF estimation, as the fast subsystem of the structure.
\subsection{Fast Subsystem Design Steps}
Consider a torque reference $T^*_{\text{el}}(\cdot)$ satisfying Assumption \ref{hyp:info}, then define the following signals:
\begin{equation}
i_{\text{q}}^* = \frac{2}{3p}\hat{\xi} T^*_{\text{el}}, \qquad p_{\text{q}}^* = \frac{2}{3p} \left(\dot{\hat{\xi}} T^*_{\text{el}} + \hat{\xi} \dot{T}^*_{\text{el}} \right).
\end{equation}
We can summarize the proposed strategy as follows:
\begin{itemize}
\item an observer of the stator current $i_{\hat{\chi}}$ is designed, including a suitable adaptive law for $R$ and $h$ (both regarded in this step as constant parameters);
\item the current estimate is imposed to track a reference of the form $i^*_{\hat{\chi}} = (w_1, i_\text{q}^*)$, where $w_1$ is the output of an exosystem used for sinusoidal generation;
\item the time scale separation is imposed by suitably choosing the gains of the structure, thus restoring the desirable behavior of the adaptive attitude observer \eqref{eq:attitude_obs}, as highlighted in the comparison between \eqref{eq:red_ord_syst} and \eqref{eq:red_ord_syst_pert}.
\end{itemize}
This way, the observer problem is solved and, as a consequence of frames synchronization, torque tracking is achieved.
\subsection{Indirect I\&I Adaptive Current Controller-Observer}
For convenience, rewrite the current dynamics \eqref{eq:PMSM_chi_frame_eta} as a linear regression form:
\begin{equation}
\begin{split}
&\frac{d}{dt}i_{\hat{\chi}} = L^{-1}\Omega^T(i_{\hat{\chi}})\theta + L^{-1}u_{\hat{\chi}} - \hat{\omega}_{\chi}\mathcal{J}i_{\hat{\chi}}, \quad \theta = \begin{pmatrix} R \\ h \end{pmatrix}\\
&D^+{\theta} = f_\theta(\chi, D^+\chi, \eta, \omega_\eta), \quad \Omega^T(i_{\hat{\chi}}) = \begin{pmatrix} - i_{\hat{\chi}} & I_2 \end{pmatrix}.
\end{split}
\end{equation}
If the map $f_\theta$ is identically zero, the above dynamics can be treated as in classic adaptive observer design, with $\theta$ an unknown parameter vector.
Bearing this idea in mind, consider an I\&I observer of the form
\begin{equation}
\begin{split}
&\dot{\hat{\imath}} = L^{-1}\Omega^T(i_{\hat{\chi}})(\hat{\theta} + \beta(i_{\hat{\chi}})) + L^{-1}u_{\hat{\chi}} - \hat{\omega}_\chi\mathcal{J}i_{\hat{\chi}} + k_{\text{p}}(i_{\hat{\chi}} - \hat{\imath})\\
&\dot{\hat{\theta}} = -L^{-1}\frac{\partial \beta}{\partial i_{\hat{\chi}}}(i_{\hat{\chi}}) \left[\Omega^T(i_{\hat{\chi}})(\hat{\theta} + \beta(i_{\hat{\chi}})) + u_{\hat{\chi}} - L\hat{\omega}_\chi\mathcal{J}i_{\hat{\chi}} \right],
\end{split}
\end{equation}
where $k_{\text{p}}$ is a positive scalar, while $\beta(i_{\hat{\chi}})$ is a map to be defined in the following.
Let $\tilde{\imath} = i_{\hat{\chi}} - \hat{\imath}$, $z = \hat{\theta} + \beta(i_{\hat{\chi}}) - \theta$, so that the resulting error dynamics becomes
\begin{equation}
\begin{split}
\dot{\tilde{\imath}} &= -L^{-1}\Omega^T(i_{\hat{\chi}})z - k_{\text{p}}\tilde{\imath}\\
D^+{z} &= -L^{-1}\frac{\partial \beta}{\partial i_{\hat{\chi}}}(i_{\hat{\chi}})\Omega^T(i_{\hat{\chi}})z - f_\theta(\chi, D^+\chi, \eta, \omega_\eta),
\end{split}
\end{equation}
thus suggesting the choice $\partial\beta/\partial i_{\hat{\chi}} = k_z\Omega$, with $k_z$ a positive scalar, therefore
\begin{equation}
\beta(i_{\hat{\chi}}) = k_z\begin{pmatrix}
-\frac{|i_{\hat{\chi}}|^2}{2}\\i_{\hat{\chi}}
\end{pmatrix}.
\end{equation}
It follows that the parameter estimation error takes the form
\begin{equation}
D^+{z} = -(k_z L^{-1})\underbrace{\left[\Omega(i_{\hat{\chi}})\Omega^T(i_{\hat{\chi}})\right]}_{M(i_{\hat{\chi}})}z - f_\theta(\chi, D^+\chi, \eta, \omega_\eta),
\end{equation}
which corresponds, for $f_\theta = 0$, to a classical gradient descent algorithm.
The parameter estimates are then given by:
\begin{equation}
\begin{split}
\hat{R} &= \begin{pmatrix}1 & 0_{1\times2} \end{pmatrix}\hat{\theta} - (k_z/2)|i_{\hat{\chi}}|^2\\
\hat{h} &= \begin{pmatrix}0_{2\times1} & I_2 \end{pmatrix}\hat{\theta} + k_z i_{\hat{\chi}}.
\end{split}
\end{equation}
\\
Consider the exosystem
\begin{equation}
\frac{d}{dt}\begin{pmatrix} w_1\\ w_2 \end{pmatrix} = \begin{pmatrix} 0 & \lambda\\ -\lambda & 0 \end{pmatrix}\begin{pmatrix} w_1\\ w_2 \end{pmatrix},
\end{equation}
with $\lambda$ a positive scalar for tuning.
Instead of working with a tracking error defined with the measured currents, we consider the estimated current mismatch $e =\hat{\imath} - i^*_{\hat{\chi}}$, suggesting a proportional controller of the form:
\begin{equation}
u_{\hat{\chi}} = -\Omega^T(i_{\hat{\chi}})(\hat{\theta} + \beta(i_{\hat{\chi}})) + L\hat{\omega}_{\chi}\mathcal{J}i_{\hat{\chi}} -Lk_e e + L\begin{pmatrix} \lambda w_2 \\ p_{\text{q}}^*\end{pmatrix},
\end{equation}
which in turn leads to the tracking error dynamics:
\begin{equation}
\dot{e} = -k_e e + k_{\text{p}}\tilde{\imath}.
\end{equation}
As a consequence, write the overall error system as
\begin{equation}\label{eq:ctrl_obs_error}
\begin{split}
\frac{d}{dt}\begin{pmatrix} w_1\\ w_2 \end{pmatrix} &= \begin{pmatrix} 0 & \lambda\\ -\lambda & 0 \end{pmatrix}\begin{pmatrix} w_1\\ w_2 \end{pmatrix} \qquad i^*_{\hat{\chi}}(w, t) = \begin{pmatrix}w_1 \\ i_\text{q}^*(t) \end{pmatrix}\\
\frac{d}{dt}\begin{pmatrix}e \\ \tilde{\imath} \end{pmatrix} &= \begin{pmatrix} -k_e I_2 & k_{\text{p}} I_2\\ 0_{2\times2} & -k_{\text{p}} I_2 \end{pmatrix}\begin{pmatrix}e \\ \tilde{\imath} \end{pmatrix} - \begin{pmatrix}0_{2\times2} \\ L^{-1}I_2\end{pmatrix}\Omega^T(i_{\hat{\chi}})z\\
D^+{z} &= -(k_z L^{-1})M(i^*_{\hat{\chi}} + e + \tilde{\imath})z - f_\theta(\chi, D^+\chi, \eta, \omega_\eta)\\
\dot{x}_{\text{s}} &= f_{\text{s}}(x_{\text{s}}, \chi) + N(x_{\text{s}}, \tilde{h}, \chi, \xi).
\end{split}
\end{equation}
Choose, and \textit{fix}, some positive scalars $\bar{\kappa}_e$, $\bar{\kappa}_{\text{p}}$, $\bar{\kappa}_z$.
Appealing to classical singular perturbations arguments, denote with $\varepsilon$ the perturbation parameter, let $\lambda = k_e/\bar{\kappa}_e = k_{\text{p}}/\bar{\kappa}_{\text{p}} = k_z / (L\bar{\kappa}_z) = \varepsilon^{-1}$ and let $t = t^* + \varepsilon \tau$, with $\tau$ indicating the fast time scale and $t^* \geq 0$.
The boundary layer system can be then written as follows (denote with $(\cdot)'$ the time derivative in the fast scale):
\begin{equation}\label{eq:ctrl_obs_bl}
\begin{split}
\begin{pmatrix} w_1'\\ w_2' \end{pmatrix} &= -\mathcal{J}\begin{pmatrix} w_1\\ w_2 \end{pmatrix}, \qquad i^*_{\hat{\chi}}(w, t^*) = \begin{pmatrix}w_1 \\ i_{\text{q}}^*(t^*) \end{pmatrix}\\
\begin{pmatrix}e' \\ \tilde{\imath}' \end{pmatrix} &= \begin{pmatrix} - I_2\bar{\kappa}_e & I_2 \bar{\kappa}_{\text{p}}\\ 0_{2\times2} & -I_2\bar{\kappa}_{\text{p}} \end{pmatrix}\begin{pmatrix}e \\ \tilde{\imath} \end{pmatrix} = A_{\text{f}} \begin{pmatrix}e \\ \tilde{\imath} \end{pmatrix}\\
z' &= -\bar{\kappa}_z M(i^*_{\hat{\chi}}(w, t^*) + e + \tilde{\imath})z,
\end{split}
\end{equation}
and note that only ordinary derivatives can be used since $f_\theta$ is absent.
We have the following stability result.
\begin{lemma}\label{lemma:bl_stability}
Suppose that $|i_{\textup{q}}^*(t^*)| \leq I^*$, for a positive scalar $I^*$.
Then, there exists a positive scalar $W^*$ such that, for any $w(0)$ satisfying $|w(0)| \geq W^*$, the origin of system
\begin{equation}\label{eq:bl_lemma_system}
\frac{d}{d\tau}\begin{pmatrix} e\\ \tilde{\imath} \\ z \end{pmatrix} = \begin{pmatrix}
-I_2\bar{\kappa}_e & I_2\bar{\kappa}_{\textup{p}} & 0_{2\times3}\\
0_{2\times2} & -I_2\bar{\kappa}_{\textup{p}} & 0_{2\times3}\\
0_{3\times2} & 0_{3\times2} & -\bar{\kappa}_z M(i^*_{\hat{\chi}} + e + \tilde{\imath})
\end{pmatrix}
\begin{pmatrix} e\\ \tilde{\imath} \\ z \end{pmatrix}
\end{equation}
is uniformly globally asymptotically stable and locally exponentially stable.
\end{lemma}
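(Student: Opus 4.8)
The plan is to exploit the cascade structure of \eqref{eq:bl_lemma_system}: the $(e,\tilde\imath)$-block evolves autonomously and enters the $z$-block only through the argument of $M(\cdot)$, which is always positive semidefinite. First, the $(e,\tilde\imath)$ dynamics is driven by the matrix $A_{\textup{f}}$ of \eqref{eq:ctrl_obs_bl}, block upper triangular with diagonal blocks $-\bar\kappa_e I_2$ and $-\bar\kappa_{\textup{p}}I_2$, hence Hurwitz; consequently $(e,\tilde\imath)(\tau)$ decays uniformly and globally exponentially, with rate and overshoot depending only on $\bar\kappa_e$ and $\bar\kappa_{\textup{p}}$. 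Writing $\mu(\tau)\coloneqq e(\tau)+\tilde\imath(\tau)$, one has $|\mu(\tau)|\le c_\mu e^{-\sigma\tau}|(e(0),\tilde\imath(0))|$ for suitable $c_\mu,\sigma>0$.

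Next I would treat the $z$-subsystem $z'=-\bar\kappa_z M(i^*_{\hat\chi}(w,t^*)+\mu)z$ as an exponentially perturbed gradient flow. Two facts are central. Since $M(\cdot)=\Omega(\cdot)\Omega^T(\cdot)\ge 0$, one has $\tfrac{d}{d\tau}|z|^2=-2\bar\kappa_z\,z^T M(\cdot)z\le 0$, so $|z(\tau)|\le|z(0)|$ for all $\tau\ge 0$, uniformly. For the unperturbed flow ($\mu\equiv 0$) the regressor is $\Omega^T(i^*_{\hat\chi}(w,t^*))$ with $i^*_{\hat\chi}(w,t^*)=(w_1,i^*_{\textup{q}}(t^*))$, where in the fast scale $w_1(\tau)=|w(0)|\cos(\tau+\phi_0)$ and $i^*_{\textup{q}}(t^*)$ is a constant with $|i^*_{\textup{q}}(t^*)|\le I^*$. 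A direct evaluation of the period Gramian
\[
\int_{\tau}^{\tau+2\pi}\Omega\big(i^*_{\hat\chi}(w,s)\big)\Omega^T\big(i^*_{\hat\chi}(w,s)\big)\dd s
\]
gives a constant symmetric matrix whose smallest eigenvalue is bounded below by a positive $\alpha_1$ uniformly in $\tau\ge 0$, in the phase $\phi_0$, and in $i^*_{\textup{q}}(t^*)\in[-I^*,I^*]$, provided $|w(0)|\ge W^*$ for a suitable $W^*>0$ (this lower bound is precisely what keeps the regressor component carried by $w_1$ from degenerating); the upper bound $\alpha_2 I_3$ is immediate from boundedness of $i^*_{\hat\chi}$. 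Hence $\Omega^T(i^*_{\hat\chi}(w,\cdot))$ is persistently exciting uniformly in $t^*$, and by the persistency-of-excitation argument recalled earlier (a PE regressor renders the associated gradient flow uniformly globally exponentially stable) the unperturbed $z$-subsystem is uniformly globally exponentially stable; it therefore admits a converse Lyapunov function $V(\tau,z)$ with $c_1|z|^2\le V\le c_2|z|^2$, $\dot V\le-c_3|z|^2$ along the unperturbed flow, and $|\partial_z V|\le c_4|z|$, all constants uniform in $t^*$.

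Finally I would close the argument. From $\Omega^T(i^*_{\hat\chi}+\mu)=\Omega^T(i^*_{\hat\chi})+\begin{pmatrix}-\mu & 0_{2\times2}\end{pmatrix}$ it follows that $M(i^*_{\hat\chi}+\mu)=M(i^*_{\hat\chi})+R(\tau)$ with $\|R(\tau)\|\le c_R|\mu(\tau)|\big(1+|i^*_{\hat\chi}|+|\mu(\tau)|\big)$, which decays exponentially, its transient amplitude being governed by $|(e(0),\tilde\imath(0))|$. Differentiating $V$ along \eqref{eq:bl_lemma_system} gives $\dot V\le\big(-c_3+c_4\bar\kappa_z\|R(\tau)\|\big)|z|^2$. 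For initial data small enough, $c_4\bar\kappa_z\|R(\tau)\|\le c_3/2$ for all $\tau$, which yields local exponential stability (indeed local in $(e,\tilde\imath)$ and global in $z$); for arbitrary initial data, $c_4\bar\kappa_z\|R(\tau)\|\le c_3/2$ holds for all $\tau\ge T$ with $T$ a function of $|(e(0),\tilde\imath(0))|$ only, while $|z|$ does not grow on $[0,T]$, so $z\to 0$ and one assembles a $\mathcal{KL}$ estimate uniform in $t^*$. Combined with the uniform global exponential stability of the $(e,\tilde\imath)$-block, this gives uniform global asymptotic stability and local exponential stability of the origin. I expect the main obstacle to be the uniformity bookkeeping: verifying that the period Gramian bound, hence the PE constants, are genuinely independent of the frozen slow time $t^*$ (equivalently, of the phase of $w$ and of $i^*_{\textup{q}}(t^*)$), and converting the non-uniform-in-initial-condition transient time $T$ into a bona fide $\mathcal{KL}$ bound by exploiting the monotonicity of $|z|$.
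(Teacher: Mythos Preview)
Your argument is correct and follows the same architecture as the paper: exploit the cascade, show the $(e,\tilde\imath)$-block is GES, compute the $2\pi$-period Gramian of the unperturbed regressor $\Omega(i^*_{\hat\chi}(w,t^*))$ to obtain PE once $|w(0)|\ge W^*$, then absorb the exponentially decaying perturbation coming from $\mu=e+\tilde\imath$, waiting a finite time $T$ when the initial $(e,\tilde\imath)$ is large and using the non-increase of $|z|$ on $[0,T]$. The only technical difference is in how the perturbation is handled: the paper perturbs the Gramian itself via the $L^2$ triangle inequality (Sastry's Lemma 6.1.2) to conclude that the \emph{perturbed} regressor $(\Omega^T,0)$ is UCO and then invokes the standard PE-to-exponential-convergence result directly, whereas you pass through a converse Lyapunov function for the unperturbed $z$-flow and bound the additive term $R(\tau)$ in $\dot V$. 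Both routes are valid and yield the same conclusions; the paper's avoids the converse theorem at the price of carrying the UCO bounds explicitly, while yours makes the role of the monotonicity $|z(\tau)|\le|z(0)|$ and the assembly of the $\mathcal{KL}$ bound slightly more transparent.
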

\begin{proof}
Due to the structure of system \eqref{eq:bl_lemma_system}, we have that the $(e, \tilde{\imath})$-subsystem is globally exponentially stable, hence there exist positive scalars $a_{1{\text{f}}}$, $a_{2{\text{f}}}$ such that:
\begin{equation}
\left| \begin{matrix} e(\tau) \\ \tilde{\imath}(\tau) \end{matrix} \right| \leq a_{1{\text{f}}}\exp \left(-a_{2{\text{f}}} \tau \right)\left| \begin{matrix} e(0) \\ \tilde{\imath}(0) \end{matrix} \right|
\end{equation}
The proof collapses then to the stability analysis of the $z$-subsystem, which is written as:
\begin{equation}
\begin{split}
&\frac{d}{dt}z = -\bar{\kappa}_z(\underbrace{\Omega_w\Omega_w^T}_{M_w(w, e, \tilde{\imath})} + \underbrace{\Omega_t\Omega_t^T}_{M_t(e, \tilde{\imath}, t^*)})z \\
\Omega_w = &\begin{pmatrix}-w_1 - e_1 - \tilde{\imath}_1\\ 1 \\ 0 \end{pmatrix}, \Omega_t = \begin{pmatrix}-i_{\text{q}}^* - e_2 - \tilde{\imath}_2\\ 0 \\ 1 \end{pmatrix} 
\end{split}
\end{equation}
with $\Omega = \begin{pmatrix} \Omega_w(w, e, \tilde{\imath}) & \Omega_t(e, \tilde{\imath}, t^*)) \end{pmatrix}$.
The main idea is to exploit the richness properties of $w$, regardless of the fact that both $\Omega_w$ and $\Omega_t$ are not individually PE.
\par Firstly, we want to prove that, for any bounded $e(0)$, $\tilde{\imath}(0)$, the pair $(\Omega^T, 0_{3\times3})$ is UCO, i.e.:
\begin{equation}
\alpha_1 \leq \int_{\tau}^{\tau + \delta} \underbrace{x^TM(i^*_{\hat{\chi}}(w(s), t^*) + e(s) + \tilde{\imath}(s))x}_{x^T \Omega_w\Omega_w^T x + x^T \Omega_t\Omega_t^T x} ds \leq \alpha_2,
\end{equation}
for some positive scalars $\alpha_1$, $\alpha_2$, $\delta$, for all $x = \begin{pmatrix}x_1 & x_2 & x_3 \end{pmatrix}^T \in \mathbb{R}^3$ satisfying $|x| = 1$.
Let $\Omega_0(\tau) = \Omega(i^*_{\hat{\chi}}(w(\tau), t^*))$ and $M_0(\tau) = \Omega_0(\tau)\Omega_0^T(\tau)$, then it follows that (for $\delta = 2\pi$)
\begin{equation}\nonumber
\begin{split}
\int_{\tau}^{\tau + \delta}x^T M_0(s) xds = &\; \int_\tau^{\tau + \delta}[x_1^2w_1^2(s) -2x_1x_2w_1(s)]ds\\
&\;+ \delta(x_1^2(i_{\text{q}}^*)^2 - 2x_1x_3 i_{\text{q}}^* + x_2^2 +x_3^2)\\
\geq &\; \pi |w(0)|^2 x_1^2 + 2\pi(x_2^2 + x_3^2)\\
&\;+ 2\pi x_1^2(i_{\text{q}}^*)^2 - 4\pi|x_1||x_3|I^*\\
\geq &\; \pi |w(0)|^2 x_1^2 + 2\pi(x_2^2 + x_3^2)\\
&\;+ 2\pi x_1^2(i_{\text{q}}^*)^2 - \frac{2\pi}{\rho} (x_1I^*)^2 - \rho 2\pi x_3^2,
\end{split}
\end{equation}
for any positive scalar $\rho$.
Choose $\rho \in (0, 1)$ and $W^*$ sufficiently large to enforce that all coefficients multiplying $x_i$, $i \in \{ 1, 2, 3 \}$, are positive.
Hence the pair $(\Omega_0^T, 0_{3\times3})$ is UCO.
Denote with $\beta_1$ the UCO lower bound of $(\Omega_0^T, 0_{3\times3})$, and let $\Omega = \Omega_0 + \Delta\Omega (e, \tilde{\imath})$, with $\Delta\Omega = \begin{pmatrix} \Delta\Omega_w& \Delta\Omega_t \end{pmatrix}$.
Following \cite[Lemma 6.1.2]{sastry2011adaptive}, apply the triangle inequality to yield:
\begin{equation}\nonumber
\begin{split}
\sqrt{\int_{\tau}^{\tau + \delta}x^T M x ds} =&\; \sqrt{\bigintsss_\tau^{\tau + \delta}\left|\begin{pmatrix}\Omega_w^T\\\Omega_t^T\end{pmatrix}x\right|^2ds}\\
=&\sqrt{\bigintsss_\tau^{\tau + \delta}\left|\Omega_0^T x + \begin{pmatrix}\Delta\Omega_w^T\\ \Delta\Omega_t^T\end{pmatrix}x \right|^2ds}\\
\geq&\sqrt{\beta_1} - \sqrt{\int_\tau^{\tau + \delta}|e + \tilde{\imath}|^2ds}\\
\geq&\sqrt{\beta_1} - 2\sqrt{\delta}\max_{s \geq \tau}\left| \begin{matrix} e(s) \\ \tilde{\imath}(s) \end{matrix} \right|\\
\geq&\sqrt{\beta_1} - 2\sqrt{\delta}a_{1{\text{f}}}\exp \left(-a_{2{\text{f}}} \tau \right)\left| \begin{matrix} e(0) \\ \tilde{\imath}(0) \end{matrix} \right|.
\end{split}
\end{equation}
Therefore, it is sufficient that
\begin{equation}\label{eq:local_pe_bound}
\left| \begin{matrix} e(0) \\ \tilde{\imath}(0) \end{matrix} \right| < \frac{1}{2a_{1{\text{f}}}} \sqrt{ \frac{\beta_1}{2\pi} }
\end{equation}
in order to yield $(\Omega^T, 0_{3\times3})$ UCO, with lower bound denoted with $\alpha_1$.
If the initial conditions do not satisfy \eqref{eq:local_pe_bound}, clearly it is sufficient to wait a finite time $T$ such that
\begin{equation}
T > \frac{1}{a_{2{\text{f}}}}\log\left( \left| \begin{matrix} e(0) \\ \tilde{\imath}(0) \end{matrix} \right| 2a_{1{\text{f}}} \sqrt{\frac{2\pi}{\beta_1}} \right)
\end{equation}
to recover the previous UCO property, with the same lower bound, a possibly higher upper bound, and $\delta = T + 2\pi$.
\par Pick an arbitrary positive scalar $c > 0$, and choose any initial condition satisfying $|(e(0), \tilde{\imath}(0))| \leq c$, with arbitrary $|z(0)|$.
We have that $(\Omega^T, 0_{3\times3})$ is UCO, with bounds depending on $c$.
By classic identification results, $(\Omega^T, 0_{3\times3})$ UCO if and only if so is $(\Omega^T, -\Omega \Omega^T)$.
Consider $V = z^Tz/(2\bar{\kappa}_z)$, then:
\begin{equation}
\int_{\tau}^{\tau + \delta}\frac{d}{ds}V(s)ds = \int_{\tau}^{\tau + \delta}-z(s)^T\Omega \Omega^Tz(s)ds \leq -\alpha|z(\tau)|^2,
\end{equation}
for some $\alpha$ such that $0 < 1 - 2\bar{\kappa}_z\alpha < 1$.
From \cite[Theorem 1.5.2]{sastry2011adaptive} we have:
\begin{equation}
|z(\tau)| \leq \sqrt{\frac{1}{1-2\bar{\kappa}_z\alpha}}\exp\left(-\frac{1}{2\delta} \log\left(\frac{1}{1-2\bar{\kappa}_z\alpha}\right)\tau \right)|z(0)|.
\end{equation}
\par Since $\alpha$ and $\delta$ are fixed, once $c$ is selected, and the exponential decay holds uniformly in the specified initial conditions, it follows that the origin of \eqref{eq:bl_lemma_system} is locally exponentially stable.
We cannot infer global exponential stability, though, because these parameters change once a larger (compact) set of initial conditions is selected.
However, note that $\delta \to \infty$ only if $c \to \infty$, therefore the convergence rate becomes smaller as the initial conditions grow in norm, but it is always non-zero in any compact set.
From these arguments, and the fact that the bounds hold uniformly in time, it follows that the origin of \eqref{eq:bl_lemma_system} is uniformly globally asymptotically stable.
\end{proof}
\subsection{Main Result}
Let $x_{\text{f}} \coloneqq (e, \tilde{\imath}, z)$, then we can state the following stability result, which allows to solve the \textit{mixed controller-observer problem}.
\begin{theorem}
Consider system \eqref{eq:ctrl_obs_error}, parameterized through the positive scalar $\varepsilon$ as shown above.
Denote with $(w(t), x_{\textup{f}}(t), x_{\textup{s}}(t))$ the trajectories of such system, when they exist, for initial conditions $(w(0), x_{\textup{f}}(0), x_{\textup{s}}(0))$.
Let Assumptions \ref{hyp:omega}-\ref{hyp:info} hold.
Then, there exist:
\begin{itemize}
\item an open region $\mathcal{R} \subset \mathbb{S}^1\times\mathbb{R}$, independent of $\chi(\cdot)$, $T^*_{\textup{el}}(\cdot)$, $\dot{T}^*_{\textup{el}}(\cdot)$, and such that $\bar{x}_{\textup{s}} \in \mathcal{R}$;
\item a proper indicator of $\bar{x}_{\textup{s}}$ in $\mathcal{R}$, denoted with $\sigma$;
\item class $\mathcal{KL}$ functions $\beta_{\textup{s}}$, $\beta_{\textup{f}}$;
\end{itemize}
such that, for any positive scalars $\Delta_{\textup{f}}$, $\Delta_{\textup{s}}$, $\delta$, there exist $\varepsilon^* > 0$, $W^*>0$ such that, for all $0 < \varepsilon < \varepsilon^*$ and all initial conditions satisfying $|w(0)| \geq W^*$, $|x_{\textup{f}}(0)| \leq \Delta_{\textup{f}}$, $\sigma(x_{\textup{s}}) \leq \Delta_{\textup{s}}$, the resulting trajectories are forward complete and satisfy, for all $t \geq 0$:
\begin{equation}
\begin{split}
|x_{\textup{f}}(t)| &\leq \beta_{\textup{f}}(|x_{\textup{f}}(0)|, t/\varepsilon) + \delta\\
\sigma(x_{\textup{s}}(t)) & \leq \beta_{\textup{s}}(\sigma(x_{\textup{s}}(0)), t) + \delta.
\end{split}
\end{equation}
\end{theorem}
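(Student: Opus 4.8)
The plan is to run a two-time-scale (singular perturbation) argument on the interconnection \eqref{eq:ctrl_obs_error}, using Lemma \ref{lemma:bl_stability} for the fast dynamics and the reduced-order result of \cite{SCL2019} for the slow dynamics, while tracking the residual perturbation produced by the unknown-input term $f_\theta$, which is ultimately what forces the estimates to be only \emph{practical} (the offset $\delta$ shrinks with $\varepsilon$ but does not vanish for fixed $\varepsilon$). Three ingredients are involved: the exosystem state, which is irrelevant to stability but supplies persistency of excitation, the rotation dynamics preserving $|w(t)| = |w(0)|$ so that the threshold $|w(0)| \ge W^*$ of Lemma \ref{lemma:bl_stability} holds for all $t$; the fast block $x_{\text{f}} = (e,\tilde{\imath},z)$; and the slow block $x_{\text{s}} = (\eta,\tilde\xi)$ on $\mathbb{S}^1 \times \mathbb{R}$.

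First I would fix the region and a slow Lyapunov function. Setting $x_{\text{f}} = 0$ annihilates $N$ — recall from the construction of the I\&I observer that the last two entries of $z$ equal $-\tilde h$, so $N$, which vanishes at $\tilde h = 0$, vanishes at $x_{\text{f}} = 0$ — and the slow dynamics collapses exactly to \eqref{eq:red_ord_syst}, whose attractor $\bar x_{\text{s}} = ((1,0),0)$ is, by \cite{SCL2019}, uniformly asymptotically stable with domain of attraction equal to all of $\mathbb{S}^1\times\mathbb{R}$ minus the stable manifold of the saddle $((-1,0),0)$; this set is independent of $\chi(\cdot)$, so I take it as $\mathcal{R}$ and let $\sigma$ be a proper indicator of $\bar x_{\text{s}}$ on $\mathcal{R}$. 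A converse Lyapunov theorem then yields $V_{\text{s}}$ with $\underline\alpha_{\text{s}}(\sigma) \le V_{\text{s}} \le \bar\alpha_{\text{s}}(\sigma)$ and decrease $-\alpha_{3\text{s}}(\sigma)$ along \eqref{eq:red_ord_syst}, uniformly over $\chi \in [\chi_{\text{m}},\chi_{\text{M}}]$. Since $N$ is continuous and null at $x_{\text{f}} = 0$, on each compact sublevel set $\{\sigma \le \Delta'_{\text{s}}\}$ one has $|N| \le \ell\,|x_{\text{f}}|$, so the full slow dynamics is ISS from $x_{\text{f}}$ there.

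Next I would handle the boundary layer. In the fast time $\tau = t/\varepsilon$ with $t$ frozen, the $x_{\text{f}}$-equations are exactly \eqref{eq:ctrl_obs_bl}/\eqref{eq:bl_lemma_system}, whose origin is UGAS and LES for $|w(0)| \ge W^*$ by Lemma \ref{lemma:bl_stability}, uniformly in the frozen time (the only dependence being through $i_{\text{q}}^*$, bounded by $I^*$). From this I would build a Lyapunov function $V_{\text{f}}$ for $x_{\text{f}}$ with class-$\mathcal{K}_\infty$ bounds and a dissipation rate of order $1/\varepsilon$; the only extra forcing present in the complete system, beyond what is already in the boundary layer, is the disturbance $f_\theta(\chi, D^+\chi, \eta, \omega_\eta)$ entering $D^+z$, which is uniformly bounded by Assumption \ref{hyp:omega} together with boundedness of $\eta \in \mathbb{S}^1$ and of $\hat\omega_\chi$. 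Because this disturbance is $O(1)$ while the dissipation is $O(1/\varepsilon)$, ISS of the fast subsystem gives $|x_{\text{f}}(t)| \le \beta_{\text{f}}(|x_{\text{f}}(0)|, t/\varepsilon) + c\,\varepsilon$ once $x_{\text{s}}$ stays in a compact subset of $\mathcal{R}$.

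Finally I would close the loop with a composite-Lyapunov / a-priori-bound argument. Given $\Delta_{\text{f}},\Delta_{\text{s}},\delta$, enlarge $\Delta_{\text{s}}$ to $\Delta'_{\text{s}}$ with $\{\sigma \le \Delta'_{\text{s}}\}$ compact in $\mathcal{R}$; on this set $\hat\xi$ — hence $i^*_{\text{q}},p^*_{\text{q}},I^*$ — and $\hat\omega_\chi,\omega_\eta$ are bounded, so the constants of the two previous steps are uniform and $W^*$ is fixed by Lemma \ref{lemma:bl_stability} for that $I^*$. While $x_{\text{s}} \in \{\sigma \le \Delta'_{\text{s}}\}$ the fast estimate holds; during the initial $O(\varepsilon)$ layer $x_{\text{s}}$ moves only $O(\varepsilon)$, so for small $\varepsilon$ it does not leave $\{\sigma \le \Delta'_{\text{s}}\}$; thereafter $|x_{\text{f}}|$ is $O(\varepsilon)$ up to a decaying transient, hence $|N| = O(\varepsilon)$ and $\dot V_{\text{s}} \le -\alpha_{3\text{s}}(\sigma) + O(\varepsilon)$, which keeps $\sigma \le \Delta'_{\text{s}}$, returns it below $\Delta_{\text{s}}$, and yields $\sigma(x_{\text{s}}(t)) \le \beta_{\text{s}}(\sigma(x_{\text{s}}(0)),t) + \delta$; all states remaining in compacts gives forward completeness. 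I expect the main obstacle to be precisely this last bootstrap: the slow subsystem is only \emph{regionally} stable and the excitation level $I^*$ — hence the threshold $W^*$ — depends on the very sublevel set of $\sigma$ inside which $x_{\text{s}}$ must be confined, so the order of selection of $\Delta'_{\text{s}}$, $W^*$ and $\varepsilon^*$ must be carefully arranged; a secondary point is that Lemma \ref{lemma:bl_stability} provides only set-dependent (semiglobal) exponential decay, so $V_{\text{f}}$ and the fast ISS gain must be taken on the relevant compact set rather than globally.
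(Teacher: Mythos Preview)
Your proposal is correct and follows the same two-time-scale architecture as the paper: the reduced (slow) system is precisely \eqref{eq:red_ord_syst}, handled via \cite{SCL2019}, and the boundary layer is \eqref{eq:ctrl_obs_bl}, handled via Lemma~\ref{lemma:bl_stability}, with the order of selection $\Delta_{\text{s}}'\to I^*\to W^*\to \varepsilon^*$ that you describe. The only real difference is in how the interconnection is closed. The paper does not build converse Lyapunov functions $V_{\text{s}}$, $V_{\text{f}}$ nor run the composite-Lyapunov/bootstrap argument you sketch; instead, after fixing the compact set $\mathcal{K}=\{\sigma\le\beta_{\text{s}}(\Delta_{\text{s}},0)+\delta\}$, extracting the bound $I^*$ on $i_{\text{q}}^*$, and invoking Lemma~\ref{lemma:bl_stability} to get $\beta_{\text{f}}$ and $W^*$, it appeals directly to the general semiglobal practical stability result of Teel and Ne\v{s}i\'c \cite{TeelNesic} for singularly perturbed systems, which packages exactly the bootstrap you spell out (including the handling of the $O(1)$ disturbance $f_\theta$ against the $O(1/\varepsilon)$ dissipation) into one citation. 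Your route is therefore a hand-made instance of that theorem: more self-contained and more transparent about the mechanism (in particular, you make explicit why $f_\theta$ produces only an $O(\varepsilon)$ residual and why the regional nature of the slow attractor forces the constants to be chosen on a prescribed sublevel set), at the cost of being considerably longer than the paper's three-line proof.
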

\begin{proof}
From \cite[Lemma 1]{SCL2019}, it follows that there exist $\mathcal{R}$, $\sigma$ and $\beta_{\text{s}}$ defined above such that the trajectories of system \eqref{eq:red_ord_syst} satisfy, for all $t \geq 0$ and all $x_s \in \mathcal{R}$:
\begin{equation}
\sigma(x_{\text{s}}(t)) \leq \beta_{\text{s}}(\sigma(x_{\text{s}}(0)), t).
\end{equation}
Let $\mathcal{K}\coloneqq\{x_{\text{s}}: \sigma(x_{\text{s}}) \leq \beta_{\text{s}}(\Delta_{\text{s}}, 0) + \delta \}$, then by boundedness of $T^*_{\text{el}}$ it follows that, for all $x_{\text{s}} \in \mathcal{K}$, there exists a positive scalar $I^*$ such that $\| i_{\text{q}}^* \|_{\infty} \leq I^*$.
Apply Lemma \ref{lemma:bl_stability} with the bound $I^*$ to imply the existence of a positive scalar $W^*$ such that, for all $|w(0)| \geq W^*$, the trajectories of system \eqref{eq:ctrl_obs_bl} satisfy, for a class $\mathcal{KL}$ function $\beta_{\text{f}}$:
\begin{equation}
|x_{\text{f}}(\tau)| \leq \beta_{\text{f}}(|x_{\text{f}}(0)|, \tau).
\end{equation}
Due to the regularity properties of system \eqref{eq:ctrl_obs_error}, we can use the result in \cite{TeelNesic} to imply the existence of a positive scalar $\varepsilon^*$ which yields the bounds of the statement.
\end{proof}
Clearly, appropriate choice of $\delta$ ensures an arbitrarily small residual torque tracking error.
We finally summarize the overall controller-observer structure:
\begin{equation}
\begin{split}
&\dot{\hat{\imath}} = L^{-1}\Omega^T(i_{\hat{\chi}})(\hat{\theta} + \beta(i_{\hat{\chi}})) + L^{-1}u_{\hat{\chi}} - \hat{\omega}_\chi\mathcal{J}i_{\hat{\chi}} + k_{\text{p}}(i_{\hat{\chi}} - \hat{\imath})\\
&\dot{\hat{\theta}} = -\frac{\partial \beta}{\partial i_{\hat{\chi}}}(i_{\hat{\chi}}) \left[ L^{-1}\Omega^T(i_{\hat{\chi}})(\hat{\theta} + \beta(i_{\hat{\chi}})) + L^{-1}u_{\hat{\chi}} - \hat{\omega}_\chi\mathcal{J}i_{\hat{\chi}} \right]\\
&\dot{\hat{\zeta}}_\chi = \hat{\omega}_\chi \mathcal{J}\hat{\zeta}_\chi, \quad \dot{\hat{\xi}} = \gamma\hat{h}_1, \quad \hat{\omega}_\chi = |\hat{h}|\hat{\xi} + k_\eta\hat{h}_1\\
&\frac{d}{dt}\begin{pmatrix} w_1\\ w_2 \end{pmatrix} = \begin{pmatrix} 0 & \lambda\\ -\lambda & 0 \end{pmatrix}\begin{pmatrix} w_1\\ w_2 \end{pmatrix} \quad e = \hat{\imath} - \begin{pmatrix}w_1 \\ i_{\text{q}}^* \end{pmatrix}\\
&\beta(i_{\hat{\chi}}) = k_z\begin{pmatrix}
-\frac{|i_{\hat{\chi}}|^2}{2}\\i_{\hat{\chi}}
\end{pmatrix}, \qquad
\begin{split}
\hat{R} &= \begin{pmatrix}1 & 0_{1\times2} \end{pmatrix}\hat{\theta} - (k_z/2)|i_{\hat{\chi}}|^2\\
\hat{h} &= \begin{pmatrix}0_{2\times1} & I_2 \end{pmatrix}\hat{\theta} + k_z i_{\hat{\chi}}
\end{split}
\\
&i_q^* = \frac{2}{3p}\hat{\xi} T^*_{\text{el}} \qquad p_{\text{q}}^* = \frac{2}{3p} \left(\dot{\hat{\xi}} T^*_{\text{el}} + \hat{\xi} \dot{T}^*_{\text{el}} \right)\\
&u_{\hat{\chi}} = -\Omega^T(i_{\hat{\chi}})(\hat{\theta} + \beta(i_{\hat{\chi}})) + L\hat{\omega}_{\chi}\mathcal{J}i_{\hat{\chi}} -Lk_e e + L\begin{pmatrix} \lambda w_2 \\ p_{\text{q}}^*\end{pmatrix}.
\end{split}
\end{equation}

\section{Numerical Results}\label{sec:simulation}

\begin{table}[t!]	
	\begin{center}
		\captionsetup{width=0.7\columnwidth}
		\caption{System parameters}\label{tab:TabParMot}

		\addtolength{\tabcolsep}{-4pt}  
		\begin{tabular}{lr | lr}\hline
			\hline
			{\scriptsize Stator resistance $R$ $[\Omega]$}  &{\scriptsize $0.108$} & {\scriptsize Stator inductance $L$ [$\mu \text{H}$]} & {\scriptsize $30.62$} \\ 
			 {\scriptsize Nominal angular speed [$\text{rpm}$]}  &  {\scriptsize $7000$} & {\scriptsize Rotor magnetic flux $\varphi$ [$\text{mWb}$]}&{\scriptsize $1.309$}\\
			 {\scriptsize Number of pole pairs $p$} &{\scriptsize $12$} & {\scriptsize Load inertia [$\text{Kgm}^2$]} & {\scriptsize $1.4 \times 10^{-4}$} \\ 
			\hline
			\end{tabular}
			\addtolength{\tabcolsep}{4pt}
			  
		\end{center}

\end{table}
For the simulation tests that we present in the following, we adopted as benchmark a motor for electric multirotor UAV propulsion, whose parameters are presented in Table \ref{tab:TabParMot} (corresponding to the commercial PMSM \textit{Tmotor 4006 KV380 }).
The proposed controller-observer was combined with a standard PI speed controller, in a typical cascade structure.
In particular, the proportional and integral gains of the speed controller were set to $k_{\text{p} \omega} = 0.018$, $k_{\text{i}\omega}=0.072$, respectively. 
A first order filter with time constant $\tau=0.1\text{ms}$ was then used to obtain $T^*_{\text{el}}$, $\dot{T}^*_{\text{el}}$.
The gains of the proposed scheme were selected as $k_{\text{p}} = 3.93\times10^3$, $k_e=1.964\times10^3$, while the frequency for the current harmonic injection was set to $\lambda /(2\pi) = 2 \text{kHz}$.
In place of the scalar gain $k_z$, we considered a matrix gain of the form $\diagonal\{0.005, 0.75, 0.75\}$ (the above results still apply in this context with some increased notational burden).
For what concerns the adaptive attitude observer tuning, we considered the linearization of the corresponding dynamics \eqref{eq:red_ord_syst} about the attractor $\bar{x}_{\text{s}}$, with $\chi$ set according to nominal flux and half the nominal speed ($3500 \text{rpm}$).
Then the eigenvalues of the resulting second order linear system (see \cite{SCL2019} for details) have been placed in $(-1\pm (1/3) \text{i})\times 10^2$, by means of gains $k_{\eta}=34.75$, $\gamma = 335.34$ in order to ensure proper time scale separation.
A demanding time-varying speed profile with ramps and sinusoidal waveforms, $\omega^*_{\text{m}}$, has been selected as benchmark to test the system under challenging conditions.
The observer estimates have been initialized to zero, except for $\hat{\xi}$, whose initial value has been set to $802.29 \text{Wb}^{-1}$.
\par Figure \ref{fig:SimTest} shows the results obtained under the aforementioned working conditions.
In plot (a-1) it can be seen that, after an initial transient phase, the estimate of the mechanical speed, $\hat {\omega}/p = |\hat{h}|\hat{\xi}/p$, closely matches the true signal.
Note that we did not include the term $k_\eta \hat{h}_1$, as in the expression of $\hat{\omega}_\chi$, in order to reduce noise sensitivity.
In plot (a-2) the mechanical speed estimation error is obtained as $\tilde{\omega}/p=(\omega - \hat{\omega})/p$, while plot (b-1) shows the the angular estimation error $\tilde{\vartheta} = \atan(\eta_2,\eta_1)$. 
Plot (b-2) shows the parameter $\xi$ and its estimate $\hat{\xi}$, which works correctly and with limited errors during the fast sinusoidal phase of the speed profile.
The fast subsystem variables rapidly converge to the actual signals, as expected from the imposed time scale separation.
The  resistance estimate is portrayed in plot (c-1), along with its actual value.
Note that a fast reconstruction is achieved, even with a zero initial estimate (which is somewhat penalizing as some knowledge about the nominal resistance value is usually available), and kept with minimal mismatches during the most demanding part of the speed profile.
As a result, effective torque estimation and tracking is achieved, as highlighted in plot (c-2).
In turn, the system speed accurately tracks the reference trajectory.
Finally, plots (d-1)-(d-2) depict over the initial $15 {\text{ms}}$ of the simulation the PMSM currents, along with the respective references, and the tracking error $e$, which is quickly steered to zero.
\begin{figure}[h!]
	\centering
	\psfragscanon
	
	\vspace{3pt}
	
	\begin{subfigure}[b]{0.23\textwidth}
	\psfrag{x} [B][B][0.7][0]{\small(a-1) time [s]}
	\psfrag{y} [B][B][0.7][0]{\small$\omega/p$, $\hat{\omega}/p$, $\omega^*_{\text{m}}$ [$\text{rpm}$]}
	\includegraphics[clip = true, width = \textwidth]{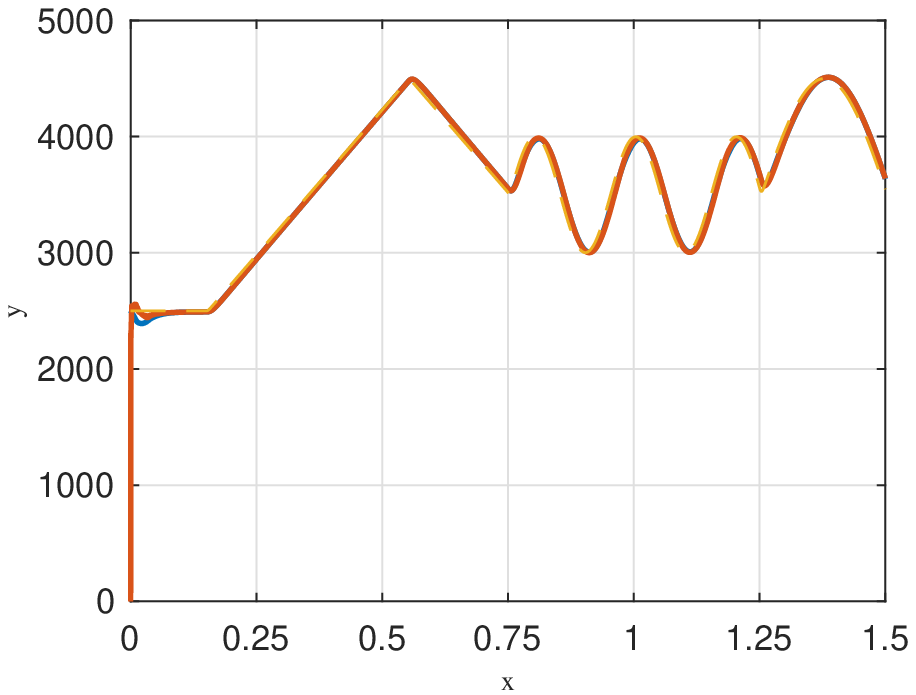}
	
	\vspace{3pt}
	
	\psfrag{x} [B][B][0.7][0]{\small(a-2) time [s]}
	\psfrag{y} [B][B][0.7][0]{\small $\tilde{\omega}/p$ [$\text{rpm}$]}
	\includegraphics[clip = true, width = \textwidth]{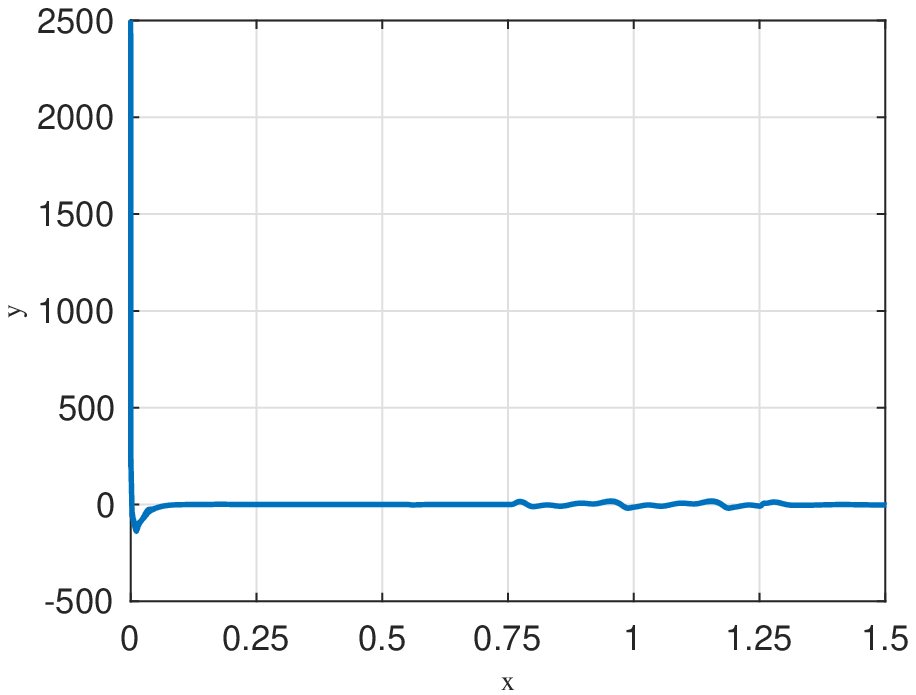}
	\end{subfigure}
	\hspace{0.01\textwidth}
	\begin{subfigure}[b]{0.23\textwidth}
	\psfrag{x} [B][B][0.7][0]{\small(b-1) time [s]}
	\psfrag{y} [B][B][0.7][0]{\small $\tilde{\vartheta}$ [$\text{rad}$]}
	\includegraphics[clip = true, width = \textwidth]{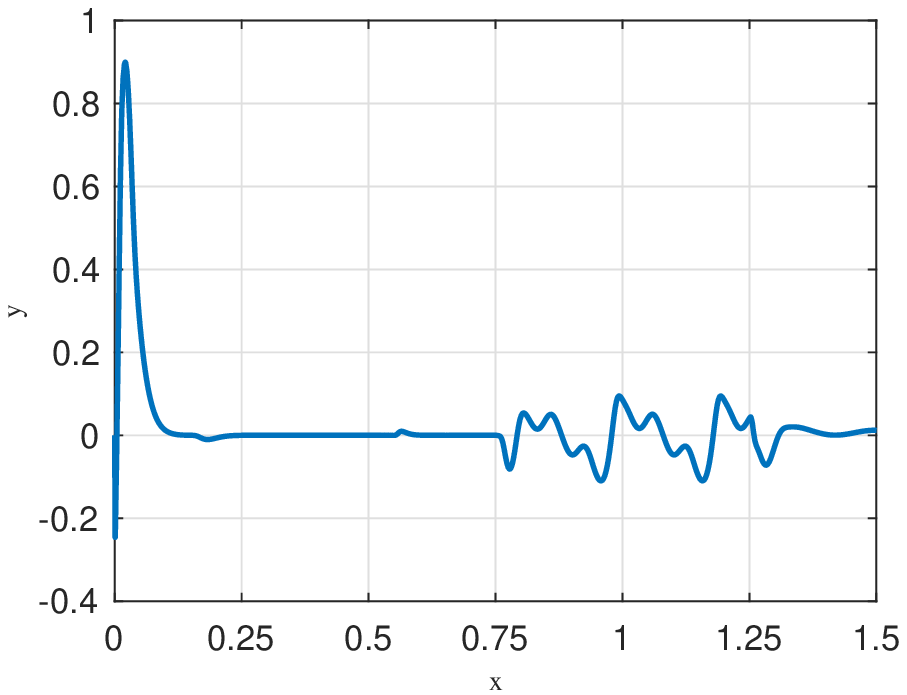}
	
	\vspace{3pt}
	
	\psfrag{x} [B][B][0.7][0]{\small(b-2) time [s]}
	\psfrag{y} [B][B][0.7][0]{\small $\xi$, $\hat{\xi}$ [${\text{Wb}}^{-1}$]}
	\includegraphics[clip = true, width = \textwidth]{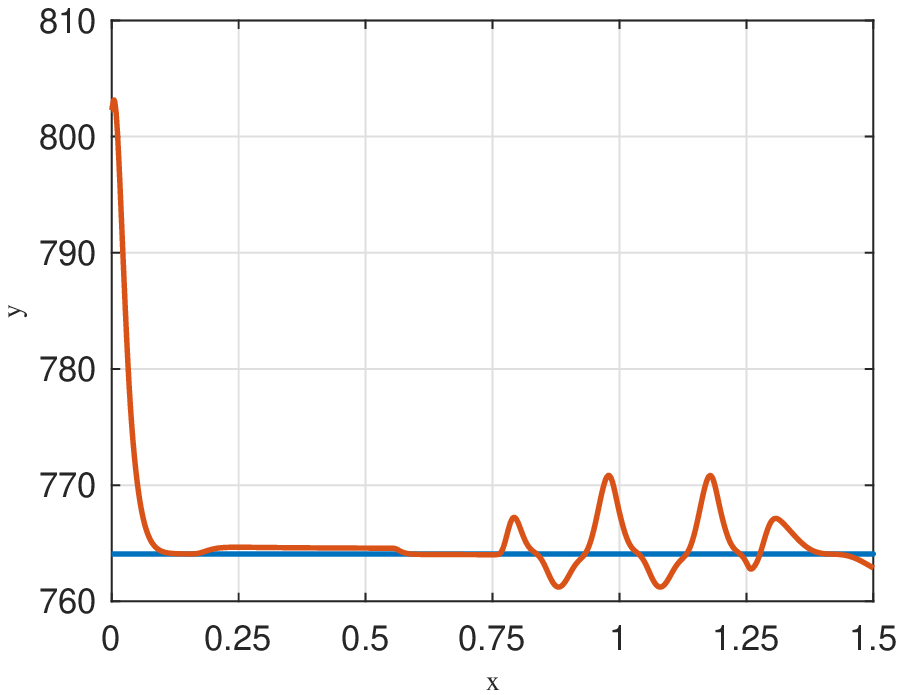}
	\end{subfigure}
	\hspace{0.01\textwidth}
	\begin{subfigure}[b]{0.23\textwidth}
	\psfrag{x} [B][B][0.7][0]{\small(c-1) time [s]}
	\psfrag{y} [B][B][0.7][0]{\small $R, \hat{R}$ [$\Omega$] }
	\includegraphics[clip = true, width = \textwidth]{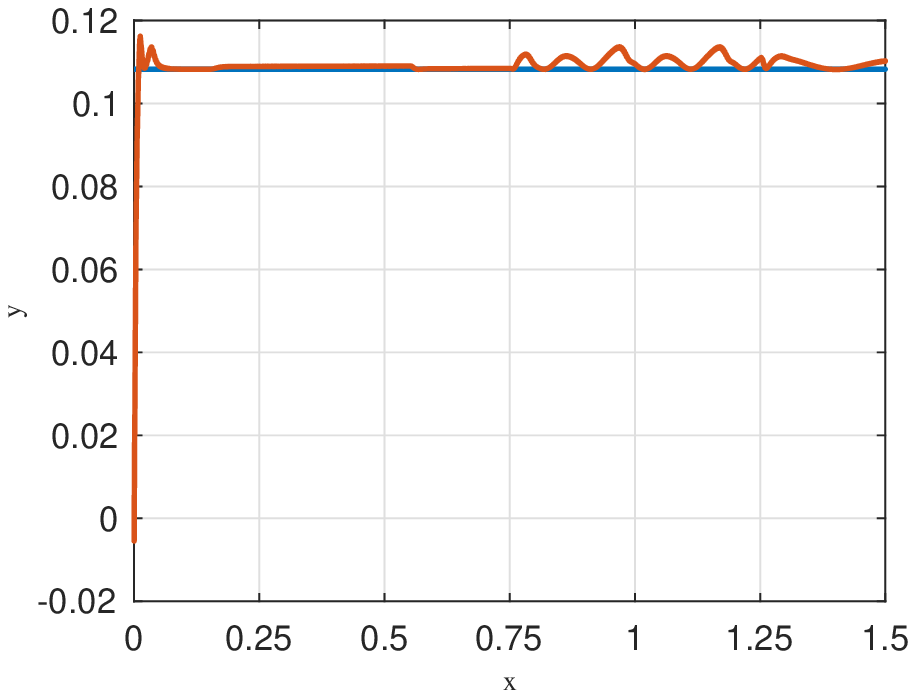}
	
	\vspace{3pt}
	
	\psfrag{x} [B][B][0.7][0]{\small(c-2) time [s]}
	\psfrag{y} [B][B][0.7][0]{\small $T_{\text{el}}, \hat{T}_{\text{el}}, T_{\text{el}}^*$ [${\text{Nm}}$] }
	\includegraphics[clip = true, width = \textwidth]{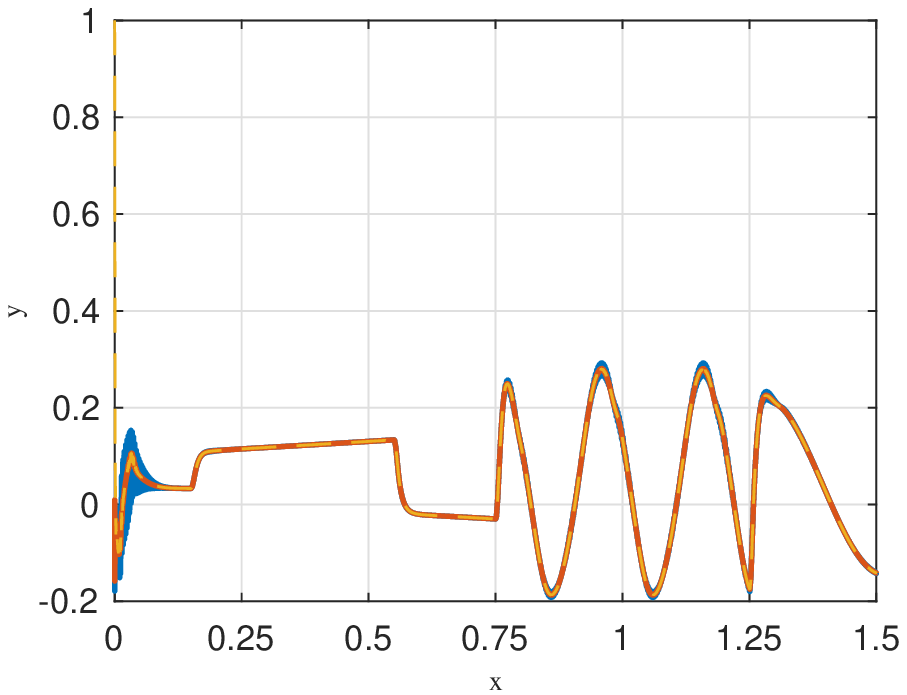}
	\end{subfigure}
	\hspace{0.01\textwidth}
	\begin{subfigure}[b]{0.23\textwidth}
	\psfrag{x} [B][B][0.7][0]{\small(d-1) time [s]}
	\psfrag{y} [B][B][0.7][0]{\small (s) $i_{\hat \chi}$, (d) $i_{\hat{\chi}}^*$ [$A$]}
	\includegraphics[clip = true, width = \textwidth]{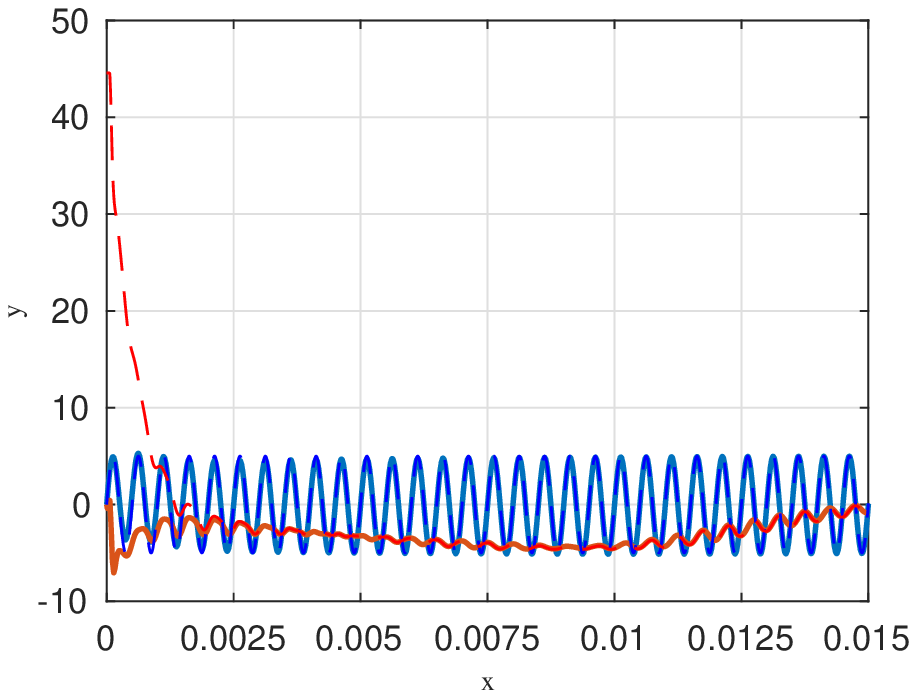}
	
	\vspace{3pt}
	
	\psfrag{x} [B][B][0.7][0]{\small(d-2) time [s]}
	\psfrag{y} [B][B][0.7][0]{\small $e$ [${\text{A}}$] }
	\includegraphics[clip = true, width = \textwidth]{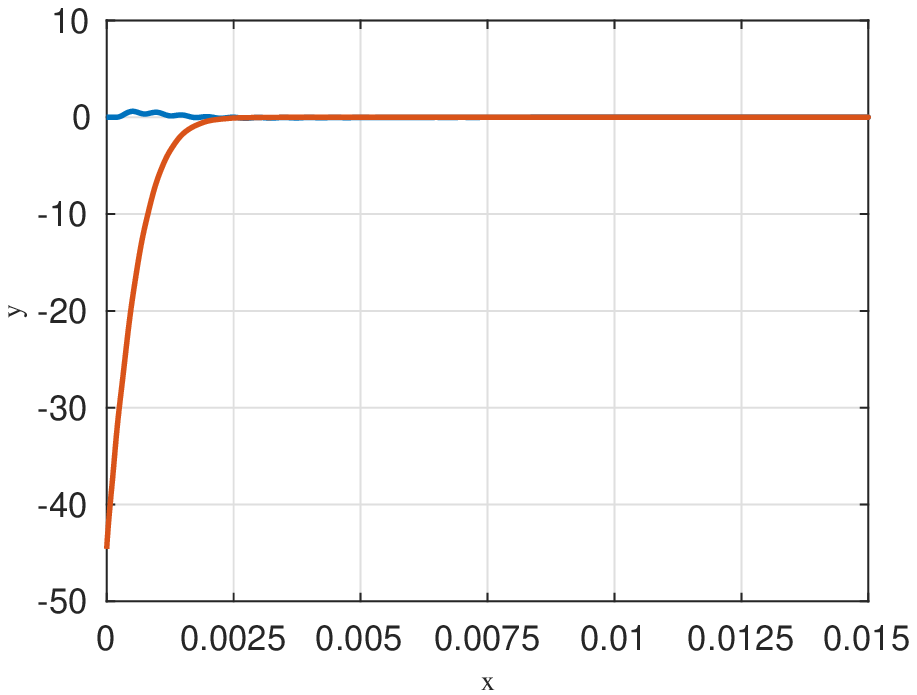}
	\end{subfigure}
	
	\vspace{3pt}
	
	\caption{(a-1): Rotor speed (blue), estimated value (red) and speed reference (dashed yellow). (a-2): mechanical speed estimation error. (b-1): rotor position reconstruction error. (b-2): parameter $\xi$ (blue) and its estimate (red).  (c-1): stator resistance (blue) and its estimate (red). (c-2): torque (blue), estimated signal $\hat{T}_{\text{el}}=(3/2)p( i_{\hat \chi_2}/\hat{\xi})$ (red) and torque reference (dashed yellow). (d-1): current signals $i_{\hat{\chi}}$ (solid) and corresponding current references $i^*_{\hat \chi}$ (dashed), with the first component in blue, the second in red. (d-2): current tracking error $e$, with the first component in blue, the second in red.}
	\label{fig:SimTest}
\end{figure}

\section{Conclusions and Future Works}\label{sec:conclusion}

A mixed sensorless controller-observer for PMSMs was proposed to solve at the same time an observation problem and a torque regulation task.
Opportune signal injection was imposed to ensure observability of the unknown stator resistance, and hence provide formal guarantees of stability and robustness for both the aforementioned objectives.
In addition, the design was proved to work in scenarios of variable speed, with no assumption on the mechanical model.
Numerical simulations were then presented to further validate the structure.
Future efforts will be dedicated to relaxing the assumptions imposed to the rotor speed, and application to other classes of electric motors.

\newpage

\bibliographystyle{ieeetr}
{\bibliography{IFAC_2020_sensorelss_bibliography}}\

\end{document}